\setlist[enumerate,1]{label=(\roman*)}
\title{Rational Bubbles: A Clarification\thanks{We would like to thank Jos\'e Scheinkman and Joseph Stiglitz for continuous discussions that were very helpful for writing this draft. We also thank Alberto Martin and Jaume Ventura for their comments on the results and the exposition of this draft, and Hiroyuki Takahashi for outstanding research assistance.}}
\author{Tomohiro Hirano\thanks{Department of Economics, Royal Holloway, University of London. Email: \href{mailto:tomohiro.hirano@rhul.ac.uk}{tomohih@gmail.com}.} \and Alexis Akira Toda\thanks{Department of Economics, Emory University. Email: \href{mailto:alexis.akira.toda@emory.edu}{alexis.akira.toda@emory.edu}.}}
\begin{document}
\maketitle

\begin{abstract}
``Rational bubble'', as introduced by the famous paper on money by \citet{Samuelson1958}, means speculation backed by nothing. The large subsequent rational bubble literature has identified attaching bubbles to dividend-paying assets in a natural way as an important but challenging question. \citet{MiaoWang2018} claim to ``provide a theory of rational stock price bubbles''. Contrary to their claim, the present comment proves the nonexistence of rational bubbles in the model of \citet{MiaoWang2018}. We also clarify the precise mathematical definition and the economic meaning of ``rational bubble'' in an accessible way to the general audience. 

		
		
		
\end{abstract}

\section{Introduction}

There is significant heterogeneity in the usage of the word ``bubble'' in the economic literature. The most common and widely accepted definition, which we will adhere to and elaborate upon below, is a situation in which ``asset prices do not reflect fundamentals'' \citep{Stiglitz1990}, or in other words, the asset price ($P$) exceeds its fundamental value ($V$) defined by the present value of dividends ($D$). The condition $P>V$ could arise for various reasons. For instance, if agents hold heterogeneous beliefs or asymmetric information, the fundamental value $V$ need not be common across agents (whereas the market price $P$ is common across agents), and hence it could be $P>V$ for some or even all agents \citep{ScheinkmanXiong2003,FostelGeanakoplos2012Tranching,Barlevy2014,AllenBarlevyGale2022}. In infinite-horizon general equilibrium models, $P>V$ could hold even with rational expectations (common beliefs and complete information) because in equilibrium, agents believe that they can resell in the future the overpriced asset to other agents, \ie, there is a speculative motive. We refer to this class of models as ``rational bubbles'', which were introduced and studied by the seminal papers of \citet{Samuelson1958}, \citet{Bewley1980}, \citet{Tirole1985}, \citet{ScheinkmanWeiss1986}, \citet{Kocherlakota1992}, and \citet{SantosWoodford1997}, as well as the large subsequent literature reviewed in \citet{MartinVentura2018} and \citet{HiranoToda2024JME}. Put simply, as we explain in \S\ref{subsec:definition}, rational bubbles mean speculation: investors rationally purchase the asset at a high price for the purpose of reselling in the future, rather than (or in addition to) receiving dividends.

Some authors (including the popular press) use the word ``bubble'' loosely to describe booms and busts in asset prices. For instance, several empirical papers including \citet{JordaSchularickTaylor2015} define a bubbly episode as a deviation from the trend that is larger than some pre-specified threshold. In the time series econometrics literature \citep{PhillipsShiYu2015,PhillipsShi2020}, a bubble is defined as an explosive dynamics of the price-dividend ratio. In monetary theory \citep{LagosRocheteauWright2017}, bubble is synonymous to liquidity premium. As such, it is natural that there are diverse views and definitions regarding asset price bubbles, and it is important to understand asset prices from a broader perspective.

The primary purpose of this comment is to clarify the meaning of ``rational bubbles'' and important questions in the rational bubble literature. We need to correctly understand the following two points. The first is the meaning of rational bubbles, which is  speculation backed by nothing. In his famous paper, \citet{Samuelson1958} proved that money, which is an asset without any dividends and any backing, can still circulate in equilibrium, meaning that money is bought and sold in equilibrium solely on the basis of speculative factors despite the fact that there is nothing that backs it. This paper is revolutionary because it created the rational bubble literature and monetary economics. In rational bubble models such as \citet{Samuelson1958} and \citet{Tirole1985}, the aspect of speculation is captured by the transversality condition for asset pricing. In any asset pricing model in which the asset can be resold forever, the transversality condition for asset pricing emerges.\footnote{The transversality condition for asset pricing is different from the transversality condition for optimality. See Appendix \ref{subsec:TVC} for detail.} Importantly, the violation of the transversality condition means the existence of speculation in asset prices. Although there often appears to be a misunderstanding that the condition is a mere technicality, one needs to correctly understand that it is \emph{not}; rather, it has a clear economic meaning. This is the critically important insight \citet{Samuelson1958} discovered in asset pricing models. Hence, ``macroeconomics of rational bubbles'' can be rephrased as ``macroeconomics with speculation''.

The second point to be understood correctly is that although the definition and economic meaning of rational bubbles is the same for money and for real assets yielding positive dividends, there is a \emph{discontinuity} in proving the existence of speculation contained in equilibrium asset prices between the two cases. As we explain in \S\ref{subsec:difficulty}, it is well known from the results of \citet{Kocherlakota1992} and \citet{SantosWoodford1997} that there is a fundamental difficulty in proving the existence of speculation attached to an asset with positive dividends. Indeed, in workhorse macro-finance models, there is no aspect of speculation in asset prices in equilibrium. Therefore, the macro-finance literature including rational bubbles has regarded attaching bubbles to dividend-paying assets in a natural way as a long-standing and important open question.\footnote{On this point, we thank Jos\'e Scheinkman and Nobuhiro Kiyotaki for pointing out the limitations of models with zero dividends and teaching us how difficult and how valuable it is to prove the existence of rational bubbles attached to real assets with positive dividends.} 

The paper by \citet{MiaoWang2018} published in \emph{American Economic Review} claims to ``provide a theory of rational stock price bubbles'' (quoted from the abstract). As it is, their paper gives the impression that they have proved the existence of speculation, \ie, rational bubbles, attached to real assets and resolved the fundamental difficulty discussed above. Indeed, at various places, a number of papers on rational bubbles are cited and compared, including seminal papers by \citet{Samuelson1958} and \citet{Tirole1985}. For instance, by citing basic theory papers on rational bubbles, \citet[p.~2595]{MiaoWang2018} state
\begin{quote}
    Some studies (e.g., \citealp{ScheinkmanWeiss1986}; \citealp{Kocherlakota1992,Kocherlakota2008}; \citealp{SantosWoodford1997}; \citealp{HellwigLorenzoni2009}) have found that infinite-horizon models of endowment economies with borrowing constraints can generate rational bubbles. Unlike this literature, our paper analyzes a production economy with stock price bubbles attached to productive firms.  
\end{quote}
Concerning the bubble existence condition, \citet[p.~2607]{MiaoWang2018} state 
\begin{quote}
    We can also show that the bubbleless equilibrium is dynamically efficient in our model. [\ldots] Thus, the condition that the economy must be dynamically inefficient in \citet{Tirole1985} cannot ensure the existence of bubbles in our model.
\end{quote}
The model of \citet{MiaoWang2018} appears to have been understood as a rational bubble model. According to our systematic literature search detailed in Appendix \ref{sec:search}, there are 74 papers that are mainly about the theory of bubbles (broadly defined) and cite \citet{MiaoWang2018} in the context of bubbles. Among these 74 papers, 68 cite it in the context of \emph{rational bubbles} specifically. While most of these papers cite \citet{MiaoWang2018} in passing, some explicitly state that it deals with rational bubbles attached to real assets. For instance, \citet*[Footnote 4, p.~S69]{DongMiaoWang2020} (whose model is a rational bubble model) state
\begin{quote}
    Introducing dividends or rents will complicate our analysis without changing our key insights. See \citet{MiaoWang2018} and \citet*{MiaoWangXu2015} for models of rational bubbles attached to assets with dividends or rents.
\end{quote}

In the present comment, contrary to \citet{MiaoWang2018}'s claim, we prove in Proposition \ref{prop:nonexistence} that rational bubbles as speculation do not exist in their model. For this purpose, in Lemma \ref{lem:bubble_cont} we extend the Bubble Characterization Lemma of \citet{Montrucchio2004} to a continuous-time setting, which could be viewed as our technical contribution. Stock prices in \citet{MiaoWang2018} reflect fundamentals and therefore do not contain the aspect of speculation. Therefore, it is incorrect to understand the model of \citet{MiaoWang2018} as dealing with rational bubbles as speculation pioneered by \citet{Samuelson1958}. The development of the macro-finance theory with speculation requires a correct understanding of it.\footnote{As noted earlier, as \citet{Samuelson1958} discovered, the transversality condition for asset pricing has an important economic meaning, \ie, speculation. \citet{MiaoWang2018} do not mathematically verify the violation of it in the first place.}

To make these points clear, this comment proceeds as follows. In \S\ref{subsec:definition}, following the textbook treatment of \citet[\S13.6]{Miao2014Dynamics}, we review the precise and standard definition of ``rational bubbles'' in a self-contained way so that the general audience can follow. A rational bubble is a situation in which the transversality condition for asset pricing is violated, and hence the asset price exceeds its fundamental value. In brief, it is a bubble as speculation backed by nothing. In \S\ref{subsec:characterization}, we discuss the simple yet powerful Bubble Characterization Lemma of \citet{Montrucchio2004}, which provides a necessary and sufficient condition for the existence of rational bubbles in economies without aggregate uncertainty. In \S\ref{subsec:difficulty}, we explain the fundamental difficulty in attaching rational bubbles to dividend-paying assets following the results of \citet{Kocherlakota1992} and \citet{SantosWoodford1997}. In \S\ref{sec:MW}, which is the main part of this comment, we prove in Proposition \ref{prop:nonexistence} the nonexistence of rational bubbles in the \citet{MiaoWang2018} model.

\section{Rational bubbles as speculation}\label{sec:bubble}

\subsection{Formal definitions}\label{subsec:definition}

The formal definition of rational bubbles was given by \citet{SantosWoodford1997}. Here we follow the textbook treatment of \citet[\S13.6]{Miao2014Dynamics} nearly verbatim. Consider an infinite-horizon economy with a homogeneous good and time indexed by $t=0,1,\dotsc$.\footnote{The continuous-time case is briefly discussed in Appendix \ref{sec:cont}.} Let $\pi_t$ denote a state price deflator.\footnote{If markets are incomplete, which is considered in \citet{SantosWoodford1997}, $\pi_t$ need not be unique. The subsequent discussion holds regardless of market completeness.} For instance, in a deterministic economy, $\pi_t$ is the date-0 price of a zero-coupon bond with maturity $t$. Consider an asset with infinite maturity that pays dividend $D_t\ge 0$ and trades at ex-dividend price $P_t$, both in units of the time-$t$ good. Then the no-arbitrage asset pricing equation\footnote{In specific models, this no-arbitrage condition must be derived from individual optimization problems, which usually comes from the first-order condition. For instance, in \citet{HiranoStiglitz2024}, land can be used as a means of savings and as collateral for borrowing, in which case a land collateral premium arises. They derive the no-arbitrage equation in a consistent manner with individual optimization and show that land prices can be written in the same form as \eqref{eq:P_iter} and the collateral premium is included in the discount rate (see \citet[Appendix O.4]{HiranoStiglitz2024}). They show that higher land prices due to the increased land collateral premium is different from rational land price bubbles as speculation backed by nothing. It is important to note that the discussion following \eqref{eq:noarbitrage_pi} is model-free.} is given by
\begin{equation}
    \pi_t P_t = \E_t[\pi_{t+1}(P_{t+1}+D_{t+1})]. \label{eq:noarbitrage_pi}
\end{equation}
Solving this equation forward by repeated substitution (and applying the law of iterated expectations) yields
\begin{equation}
    \pi_tP_t=\E_t\sum_{s=t+1}^T \pi_sD_s+\E_t[\pi_TP_T]. \label{eq:P_iter}
\end{equation}
Because all terms are nonnegative, the sum in \eqref{eq:P_iter} from $s=t+1$ to $s=T$ is
\begin{enumerate*}
    \item increasing in $T$ and
    \item bounded above by $\pi_tP_t$,
\end{enumerate*}
so it converges almost surely as $T\to\infty$. Therefore the \emph{fundamental value} of the asset
\begin{equation}
    V_t\coloneqq \frac{1}{\pi_t}\E_t\sum_{s=t+1}^\infty \pi_sD_s \label{eq:Vt}
\end{equation}
is well-defined, and letting $T\to\infty$ in \eqref{eq:P_iter}, we obtain $P_t=V_t+B_t$, where we define the \emph{asset price bubble} as
\begin{equation}
    B_t\coloneqq \lim_{T\to\infty} \frac{1}{\pi_t}\E_t[\pi_TP_T]\ge 0.\label{eq:Bt}
\end{equation}
That is, an asset price bubble is equal to the difference between the market price of the asset and its fundamental value (\ie, the present value of dividends). By definition, there is no bubble at time $t$ if and only if
\begin{equation}
    \lim_{T\to\infty} \E_t[\pi_TP_T]=0. \label{eq:TVC}
\end{equation}
This is the mathematical formalization of the idea explained in \citet{Stiglitz1990}. The condition \eqref{eq:TVC} is called the \emph{transversality condition} for asset pricing.\footnote{\citet{Samuelson1958} does not use the word ``transversality condition'' but correctly understands that no-arbitrage alone does not pin down asset prices, writing (p.~470) ``We never seem to get enough equations: lengthening our time period turns out always to add as many new unknowns as it supplies equations''. \citet{SantosWoodford1997} attribute the transversality condition to the unpublished manuscript of \citet{Scheinkman1977}, which we were unable to find. Appendix \ref{subsec:TVC} further clarifies the transversality condition.}

Three remarks are in order. 
\begin{enumerate*}
    \item First, the economic meaning of the bubble component $B_t$ in \eqref{eq:Bt} is that it captures a speculative aspect, that is, agents buy the asset now for the purpose of resale in the future, rather than for the purpose of receiving dividends. When the transversality condition for asset pricing \eqref{eq:TVC} holds, the aspect of speculation becomes negligible and asset prices are determined only by factors that are backed in equilibrium, namely future dividends. On the other hand, if $\lim_{T\to\infty}\E_t[\pi_TP_T]>0$, equilibrium asset prices contain a speculative aspect.
    \item Second, if $D_t=0$ for all $t$ (\citet{Samuelson1958}'s case of \emph{money} or \emph{pure bubble}), the fundamental value of the asset is zero and there is only an aspect of speculation. In this case, $P_t>0$ if and only if $\lim_{T\to\infty}\E_t[\pi_TP_T]>0$, \ie, the current price of the asset will be positive in equilibrium if and only if one expects that one will be able to resell the asset at a positive price in the future. The definition of a rational bubble is the same for money and for real assets yielding positive dividends: it is the speculative component of the asset price.\footnote{It is often claimed that asset bubbles are associated with human irrationality. The rational bubble literature does not deny that, but rather stresses that even if agents hold rational expectations and common knowledge about assets, bubbles can still arise as an equilibrium outcome. Take \citet{Samuelson1958}'s famous model of money. Unless everyone in the economy infinitely far into the future accepts money and expects so, money cannot circulate at present. This expectation formation requires rationality. In other words, the circulation of money is an outcome of the pursuit of rationality. The \citet{Samuelson1958} model clearly illustrates an important insight of rational bubbles, \ie, rationality and speculation.} 
    \item Third, although the definition is the same, there is a discontinuity in proving the existence of a bubble between the cases with $D_t=0$ and $D_t>0$. In other words, as we explain in \S\ref{subsec:difficulty}, it is well known from \citet{Kocherlakota1992} and \citet{SantosWoodford1997} that there is a fundamental difficulty in generating a bubble attached to an asset with $D_t>0$.
\end{enumerate*}

\subsection{Bubble Characterization Lemma}\label{subsec:characterization}

To prove the existence of rational bubbles, we need to prove $P>V$, or equivalently, verify the violation of the transversality condition for asset pricing \eqref{eq:TVC}. For an asset that pays no dividends ($D=0$, pure bubble), because the fundamental value is necessarily zero, showing $P>0$ suffices. However, for dividend-paying assets, \ie, real assets such as stocks, land, and housing, the verification of the violation of the transversality condition is not easy because it is cumbersome to calculate the state price deflator $\pi_t$. Fortunately, in economies without aggregate uncertainty, there is a very simple characterization due to \citet{Montrucchio2004}.\footnote{\citet{Montrucchio2004} also considers the case with aggregate uncertainty, but the characterization is less sharp. Without aggregate uncertainty, the condition is ``necessary and sufficient'' as in Lemma \ref{lem:bubble}, whereas with aggregate uncertainty, there are separate ``necessary'' and ``sufficient'' conditions.}

\begin{lem}[Bubble Characterization, \citealp{Montrucchio2004}, Proposition 7]\label{lem:bubble}
In an economy without aggregate uncertainty, if $P_t>0$ for all $t$, the asset price exhibits a rational bubble if and only if $\sum_{t=1}^\infty D_t/P_t<\infty$.
\end{lem}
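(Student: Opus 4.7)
The plan is to eliminate the state price deflator $\pi_t$ entirely by exploiting the deterministic structure, and reduce the question of whether $B_t>0$ to the convergence of an infinite product. Without aggregate uncertainty, the no-arbitrage relation \eqref{eq:noarbitrage_pi} becomes $\pi_s P_s=\pi_{s+1}(P_{s+1}+D_{s+1})$, which rearranges to
\begin{equation*}
    \frac{\pi_{s+1}P_{s+1}}{\pi_s P_s}=\frac{P_{s+1}}{P_{s+1}+D_{s+1}}=\frac{1}{1+D_{s+1}/P_{s+1}}.
\end{equation*}
Since $P_s>0$ for all $s$ by hypothesis, all these ratios are well-defined and lie in $(0,1]$.

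Next, I would telescope. For any $T>t$,
\begin{equation*}
    \frac{\pi_T P_T}{\pi_t P_t}=\prod_{s=t+1}^{T}\frac{1}{1+D_s/P_s},
\end{equation*}
so that
\begin{equation*}
    \frac{1}{\pi_t}\pi_T P_T = P_t\prod_{s=t+1}^{T}\frac{1}{1+D_s/P_s}.
\end{equation*}
As $T\to\infty$, the right-hand side is monotonically nonincreasing in $T$ (each factor is at most $1$), hence the limit exists, and by definition \eqref{eq:Bt} it equals $B_t$.

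Finally, I would invoke the standard fact about infinite products with nonnegative terms: since $D_s/P_s\ge 0$, the product $\prod_{s=t+1}^{\infty}(1+D_s/P_s)$ converges to a finite positive limit if and only if $\sum_{s=t+1}^\infty D_s/P_s<\infty$, and diverges to $+\infty$ otherwise. Equivalently, $\prod_{s=t+1}^\infty (1+D_s/P_s)^{-1}>0$ iff the series converges. Combined with the displayed identity and the positivity of $P_t$, this gives $B_t>0$ iff $\sum_{s=1}^\infty D_s/P_s<\infty$ (convergence of the tail starting at $t+1$ is equivalent to convergence of the full series starting at $s=1$).

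There is no real obstacle here; the only thing worth handling carefully is the elementary lemma on infinite products, which one can reduce to the series $\sum \log(1+D_s/P_s)$ and then use the inequalities $\tfrac{x}{1+x}\le \log(1+x)\le x$ for $x\ge 0$ to show that $\sum \log(1+D_s/P_s)<\infty$ iff $\sum D_s/P_s<\infty$. The argument is clean precisely because the absence of aggregate uncertainty lets us solve the one-period Euler equation for $\pi_{s+1}/\pi_s$ pathwise, turning a problem about discounted expectations into a problem about a deterministic infinite product.
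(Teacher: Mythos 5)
Your proof is correct and follows essentially the same route as the paper: the paper's own argument (given explicitly for the continuous-time analogue in Lemma \ref{lem:bubble_cont}, where $-\diff(q_tP_t)/(q_tP_t)=\diff F_t/P_t$ integrates to $q_TP_T=q_0P_0\exp\bigl(-\int_0^T\diff F_t/P_t\bigr)$) is exactly the continuous-time version of your telescoping identity $\pi_TP_T=\pi_tP_t\prod_{s=t+1}^T(1+D_s/P_s)^{-1}$ combined with the standard convergence criterion for infinite products.
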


See \citet{HiranoTodaNecessity} for a simple proof of the Bubble Characterization Lemma as well as its many applications.

\subsection{Fundamental difficulty in attaching rational bubbles to dividend-paying assets}\label{subsec:difficulty}

A primordial (though overlooked) implication of the Bubble Characterization Lemma is that if the price-dividend ratio $P_t/D_t$ (or the dividend yield $D_t/P_t$) converges to a positive constant, then necessarily $\sum_{t=1}^\infty D_t/P_t=\infty$, so there is no rational bubble by Lemma \ref{lem:bubble}. This implies that in a model with dividend-paying assets, irrespective of the model setup, a rational bubble as speculation can never occur if the price-dividend ratio (or the dividend yield) converges to a positive constant. In particular, in models where $(P_t,D_t)=(P,D)$ is constant, a rational bubble can arise only if $D=0$, \ie, money or pure bubble. We can easily see this result even without appealing to Lemma \ref{lem:bubble}. Consider an asset with price $P>0$ and dividend $D>0$. Then the gross risk-free rate is $R=(P+D)/P>1$, so the fundamental value of the asset is
\begin{equation*}
    V=\sum_{s=1}^\infty R^{-s}D=\frac{D/R}{1-1/R}=\frac{D}{R-1}=\frac{D}{\frac{P+D}{P}-1}=P.
\end{equation*}
Since $P=V$, there is no rational bubble.

More generally, rational bubbles cannot arise if dividends consist of a non-negligible fraction of aggregate endowments. This fundamental difficulty of attaching a rational bubble to dividend-paying assets follows from the results of \citet{Kocherlakota1992} and \citet{SantosWoodford1997}, which are often overlooked despite the fact that they are critically important for the proof of the existence of rational bubbles. \citet[Proposition 4]{Kocherlakota1992} shows that in infinite-horizon models with rational bubbles, there must exist an agent whose asset holdings fluctuate between two positive numbers infinitely often and that the present value of the endowments of such an agent is infinite.\footnote{The proof contained an error and was corrected by \citet{KocherlakotaToda2023JET}.} \citet[Theorem 3.3]{SantosWoodford1997} significantly extend this result under incomplete markets and borrowing constraints and show that if the present value of the aggregate endowment is finite, then the price of an asset in positive net supply or with finite maturity equals its fundamental value. Furthermore, their Corollary 3.4 (together with Lemma 2.4) shows that, when the asset pays nonnegligible dividends relative to the aggregate endowment, rational bubbles as speculation are impossible. See \citet[\S3.4]{HiranoToda2024JME} for an accessible discussion of these results.

The bubble impossibility results of \citet{Kocherlakota1992} and \citet{SantosWoodford1997} have been understood as a fundamental difficulty in attaching bubbles to dividend-paying assets. Perhaps due to this fundamental difficulty, the rational bubble literature (see \citet{MartinVentura2018} and \citet{HiranoToda2024JME} for reviews) has almost exclusively studied pure bubble models with $D_t=0$ and considered attaching bubbles to dividend-paying assets in a natural way as a long-standing and important open question. Therefore, if one claims to prove the existence of a rational bubble attached to an asset with $D_t>0$, one needs to verify so in a consistent manner with \citet{Kocherlakota1992}, \citet{SantosWoodford1997}, and \citet{Montrucchio2004}. Otherwise, it does not mean one has proved it.

\section{Nonexistence of rational bubbles in \texorpdfstring{\citet{MiaoWang2018}}{}}\label{sec:MW}

\citet{MiaoWang2018} (henceforth MW) claim to ``provide a theory of rational stock price bubbles'' (quoted from their abstract) in an equilibrium model in relation to financial conditions. Here is a brief model description. There are a continuum of firms maximizing shareholder value, and investment opportunities arrive stochastically according to independent Poisson processes. Upon the arrival of an investment opportunity, a firm may invest subject to a collateral constraint that involves the firm value. Otherwise, the model is a standard neoclassical growth model (representative households, neoclassical production function, no capital adjustment costs, etc.).\footnote{Because \citet{MiaoWang2018} provide a micro-foundation for the collateral constraint and derive the continuous-time model by carefully taking the limit of a discrete-time model, the exposition is rather involved. See \citet{Sorger2020} for a concise model description.} MW show that the firm value (stock price) equals $V(K)=QK+B$, where $K$ is capital and $Q>0$ and $B\ge 0$ are constants. Note that because investment opportunities arrive to firms stochastically, a firm with no capital at present could have a positive value $V(0)=B$ in anticipation of the arrival of future investment opportunities. Importantly, MW (p.~2601) ``interpret'' $QK$ as the fundamental value and $B$ as the bubble. Based on the way MW write their paper, it is clear that they view their model as a rational bubble model. However, using the Bubble Characterization Lemma in \S\ref{subsec:characterization}, it is straightforward to show that in the MW model, the stock price equals the fundamental value and hence there is no speculative aspect.

\begin{prop}\label{prop:nonexistence}
Rational bubbles in the sense of \S\ref{sec:bubble} do not exist in the model of \citet{MiaoWang2018}.
\end{prop}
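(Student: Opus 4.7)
The plan is to apply the continuous-time version of the Bubble Characterization Lemma (Lemma \ref{lem:bubble_cont}) to the aggregate stock in the MW economy and show that the dividend yield $D_t/P_t$ is bounded away from zero in the long run, so that $\int_0^\infty D_t/P_t\,dt$ diverges and no rational bubble can exist.

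First I would handle the idiosyncratic uncertainty. Although each MW firm faces Poisson arrivals of investment opportunities, MW work with a continuum of firms, so by the law of large numbers across firms the aggregate capital $K_t$, aggregate output, aggregate investment, and aggregate dividend flow $D_t$ all evolve deterministically along the equilibrium path. This places us in the no-aggregate-uncertainty setting in which Lemma \ref{lem:bubble_cont} has its sharp necessary-and-sufficient form. Aggregating MW's firm value $V(K^i) = QK^i + B$ across firms, the aggregate stock price takes the form $P_t = QK_t + \bar{B}$ with $\bar{B}\ge 0$ a constant, while the aggregate dividend flow $D_t$ equals aggregate output net of wages and investment.

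Second I would exploit MW's balanced-growth characterization. Along the BGP, $K_t$ and $D_t$ grow at the same positive constant rate while the additive constant $\bar{B}$ does not grow, so $P_t \sim QK_t$ and $D_t/P_t$ converges to a strictly positive constant $d^\ast$ as $t\to\infty$. Consequently $\int_0^\infty D_t/P_t\,dt = \infty$, and Lemma \ref{lem:bubble_cont} yields $P_t = V_t$ for all $t$, ruling out any rational bubble in the sense of \S\ref{subsec:definition}.

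The step I expect to be most delicate is verifying that aggregate shareholder payouts really are bounded below by a positive multiple of $K_t$ on the BGP, given that firm-level investment occurs only stochastically and is financed out of current cash flows subject to a collateral constraint; a priori one could worry that payouts lag capital growth. Once this is settled directly from MW's own first-order conditions, the conclusion is a short calculation. A useful byproduct is the clarification that MW's additive term $\bar{B}$ is in fact part of the fundamental value---it capitalizes the anticipated cash flows from future investment opportunities of firms currently holding little capital---and not a speculative component in the sense of \S\ref{sec:bubble}.
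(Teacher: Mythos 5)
Your proposal follows essentially the same route as the paper: invoke the law of large numbers to eliminate aggregate uncertainty, observe that the dividend yield of the aggregate stock market converges to a strictly positive constant (in MW the equilibrium is in fact a steady state with $P_t\to P>0$ and $D_t\to D>0$, not a growing BGP, but this changes nothing), and apply Lemma \ref{lem:bubble_cont} to conclude $\int_0^\infty D_t/P_t\,\diff t=\infty$. The one step you leave implicit is passing from the aggregate back to individual firms: since each firm's bubble component is nonnegative by the argument of \S\ref{subsec:definition}, a zero aggregate bubble forces $P_{it}=V_{it}$ for almost every firm $i$, which is the level at which MW's claim about stock price bubbles is made.
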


\begin{proof}
In the MW model, there is a continuum of firms of unit measure subject to only idiosyncratic risks. Let $P_{it},V_{it}$ be the stock price and fundamental value of firm $i$ at time $t$. Similarly, let $P_t,V_t$ be the price and fundamental value of the aggregate stock market. By the discussion in \S\ref{subsec:definition}, we have $P_{it}-V_{it}\ge 0$. Aggregating across firms, we have $P_t-V_t\ge 0$, with equality if and only if $P_{it}=V_{it}$ for $i$ almost surely. Therefore to prove the nonexistence of rational stock price bubbles in individual firms, it suffices to prove the nonexistence of rational bubbles in the aggregate stock market.

Because there is no aggregate uncertainty, aggregate dividend is paid out continuously according to $\diff F_t=D_t\diff t$ in the notation of Lemma \ref{lem:bubble_cont} in Appendix \ref{sec:cont}. Furthermore, the aggregate stock price $P_t$ and the dividend $D_t$ converge to constants $P,D>0$. (The positivity of dividend is discussed on p.~2608.) In particular, we have $\int_0^\infty D_t/P_t\diff t=\infty$. By Lemma \ref{lem:bubble_cont}, there exist no rational bubbles in either the aggregate stock market or individual firms.
\end{proof}

The problem with \citet{MiaoWang2018} is that the flow of logic is the opposite to rational bubble models, yet they write their paper as if they have resolved the fundamental difficulty in rational bubble models described in \S\ref{subsec:difficulty}. To clarify, in standard rational bubble models as in \S\ref{subsec:definition}, the flow of logic is that
\begin{enumerate*}
    \item the asset pricing equation \eqref{eq:noarbitrage_pi} is derived from individual optimization, and then
    \item whether there is a rational bubble or not (whether $P>V$ or $P=V$) is verified based on the transversality condition \eqref{eq:TVC}.
\end{enumerate*}
In contrast, in MW, the flow of logic is that
\begin{enumerate*}
    \item[(i')] the bubble component $B_t$ is defined based on an ``interpretation'' of their own, and then
    \item[(ii')] the differential equation that $B_t$ satisfies (which is Equation (20) in MW) is interpreted as an asset pricing equation.
\end{enumerate*}
What concerns us is that MW repeatedly use the word ``rational'' (20 times in total) and compare their results to the rational bubble literature as quoted in the introduction, without acknowledging the fact that the definition of rational bubbles is completely different. In other words, MW claim to have solved an important problem in the literature by changing the problem itself.  

Several papers study models similar to MW. In these papers, for justifying bubbles, an expression like ``we interpret $B$ as a bubble'' appears, without reference to the formal definition and the economic meaning of rational bubbles of the sort in \S\ref{sec:bubble}. Table \ref{t:expression} lists these expressions in MW (whose working paper circulated since 2011) and the subsequent follow-up papers that we found based on our systematic literature search detailed in Appendix \ref{sec:search}.

\begin{table}[!htb]
    \centering
    \begin{tabular}{llr}
    \toprule
    Paper & Expression & Page \\
    \midrule
    \citet{MiaoWang2012} & ``we require $b_{it}>0$ and interpret it as a bubble'' & 85 \\
    \citet{MiaoWang2014} & ``We interpret $B_{it}>0$ as the bubble component'' & 157 \\
    \citet{MiaoWang2015} & ``We may interpret $B_t$ as a bubble component'' & 772 \\
    \citet{MiaoWangXu2015} & ``we may interpret this component as a bubble'' & 608 \\
    \citet{MiaoWangXu2016} & ``$B_t>0$ for all $t$ and interpret it as a bubble'' & 283 \\
    \citet{MiaoWang2018} & ``we interpret $B_t$ as a bubble component'' & 2601 \\
    \citet{ChevallierElJoueidi2019} & ``$b_t\neq 0$ is the bubble term'' & 122 \\
    \citet{Sorger2020} & ``$q(t)\neq 0$ is said to contain a bubble'' & 525 \\
    \citet{He2021} & ``the second term is viewed as the
bubble'' & 251 \\
    \citet{Ikeda2022} & ``interpreted as a liquidity bubble'' & 1579 \\
    \bottomrule
    \end{tabular}
    \caption{Typical expressions in models of $V(K)=QK+B$.}
    \label{t:expression}
\end{table}

These papers appear to have been understood as models of rational asset price bubbles attached to real assets (Appendix \ref{sec:search}). However, as discussed in \S\ref{subsec:difficulty}, if one claims to prove the existence of rational bubbles attached to real assets with $D_t>0$, one needs to prove it in a consistent manner with \citet{Kocherlakota1992}, \citet{SantosWoodford1997}, and \citet{Montrucchio2004}.\footnote{Even in the papers listed in Table \ref{t:expression}, the violation of the transversality condition for asset pricing \eqref{eq:TVC} is not verified. Hence the existence of rational bubbles is not proved (indeed, rational bubbles do not exist by the same argument as Proposition \ref{prop:nonexistence}), despite the fact that the rational bubble literature is frequently cited and compared.} Moreover, bubbles in dividend-paying assets can never occur so long as the price-dividend ratio (or the dividend yield) converges to a positive constant. (See the discussion in \S\ref{subsec:difficulty} of implications of Lemma \ref{lem:bubble}.) In models in Table \ref{t:expression}, asset prices reflect fundamentals, regardless of the interpretation of the term $B$. In other words, in all papers in Table \ref{t:expression}, no rational bubbles exist. Thus, the way \citet{MiaoWang2018} write their paper is misleading as they claim the existence of a bubble as if it were a rational bubble.

It is more appropriate to interpret \citet{MiaoWang2018} and others as multiple equilibria in asset pricing models, where there are two steady states, one with high stock prices and the other with low stock prices. In both steady states, stock prices always reflect fundamentals, but self-fulfilling expectations determine which steady state is reached.\footnote{\citet{Azariadis1981}, \citet{CassShell1983}, and \citet{Farmer1999} have produced a large literature that studies the effects of self-fulfilling expectations on macroeconomic outcomes including asset prices, which is clearly important. \citet{MiaoWang2018} and others would belong to this literature.} In fact, \citet[p.~772]{MiaoWang2015} state ``one may also interpret it as a self-fulfilling component or a belief component if one wants to avoid using the term ``bubble''''.

\section{Conclusion}

Due to the scattering of diverse views and definitions, the term ``bubble'' tends to be used in different ways across papers and sometimes within the same paper without acknowledging the difference. This is akin to switching the rule between baseball and cricket at will during a game. We think that this dual standard leads to further confusion, distorts the literature, and prevents the healthy development of science. 

Rational bubbles pioneered by \citet{Samuelson1958} mean speculation backed by nothing (see the precise mathematical definition of it in \S\ref{sec:bubble}). \citet{MiaoWang2018} claim to ``provide a theory of rational stock price bubbles'' and their paper appears to have been understood as a model of rational bubbles. Indeed, the way \citet{MiaoWang2018} write their paper gives the impression that they have provided a theory of rational bubbles attached to dividend-paying assets. However, contrary to their claim, as we prove in this comment, no rational bubbles as speculation exist in their model. We would like to stress that the nonexistence result of rational bubbles in \citet{MiaoWang2018} is not just a matter of semantics on the term ``bubble''. Rather, it should be understood that the current state of widespread misunderstanding can mislead the general reader's understanding of rational bubbles as speculation, and worse, it may end up distorting the literature. Both should be avoided.

Bubbles are an important economic phenomenon. Although our primary research interest is the theory of rational bubbles as outlined in \S\ref{sec:bubble}, we have an open mind, and we welcome various approaches such as based on rationality, heterogeneous beliefs, asymmetric information, self-fulfilling expectations, econometrics, empirical considerations, or others. Researchers are free to define a bubble in whatever way they like. In our opinion, a significant merit of the rational bubble approach pioneered by \citet{Samuelson1958}, \citet{Bewley1980}, \citet{Tirole1985}, \citet{ScheinkmanWeiss1986}, \citet{Kocherlakota1992}, \citet{SantosWoodford1997} and outlined in \S\ref{sec:bubble} is that the definition is precise, model-free, and has an important economic meaning, \ie, speculation backed by nothing, as opposed to the more or less model-dependent and ad hoc nature of other approaches. Whatever definition we adopt, we need to be consistent. It is not right to claim that a particular model generates rational bubbles attached to dividend-paying assets using one definition, when in fact there exist no rational bubbles in the same model under a different (and more standard) definition, without acknowledging the differences in the definitions.

The purpose of this comment is to reach a better and mutual understanding about bubbles, not just among experts but also among the general audience. We hope that our comment clarifies the confusion surrounding the theory of rational bubbles as speculation and that the science of bubbles prosper, without ever imploding.

\appendix

\section{Systematic literature search}\label{sec:search}

We use Google Scholar to identify papers that cite \citet{MiaoWang2018} (henceforth MW) or its working paper version titled ``Bubbles and Credit Constraints''. As of the time of search (July 2024), MW was cited 198 times and the working paper version 125 times. Of these citations, we only retain published journal articles written in English and remove duplicate citations, resulting in 125 citations. We then checked each paper to see how MW is cited. Of these 125 papers, 74 are mainly about the theory of bubbles (broadly defined, not necessarily rational bubbles), 96 cite MW in the context of bubbles (as opposed to other contexts such as credit constraints, investment, multiple equilibria, etc.), and 71 cite MW in the context of \emph{rational bubbles} specifically. (We regard the citation as rational bubbles if the paper explicitly mentions ``rational bubble'' or cites MW along with other papers on rational bubbles.) All 74 papers that are mainly about the theory of bubbles (broadly defined) cite MW in the context of bubbles. Among these 74 papers, 68 cite MW in the context of rational bubbles specifically. For more details, see the spreadsheet available at \url{https://alexisakira.github.io/files/MW.xlsx}.

\section{Bubble characterization in continuous-time}\label{sec:cont}

This appendix extends Lemma \ref{lem:bubble} to continuous-time.

First, consider the discrete-time setting in \S\ref{subsec:definition} but without aggregate uncertainty. If the asset is risk-free, taking the unconditional expectations of \eqref{eq:noarbitrage_pi} and setting $q_t=\E[\pi_t]>0$ (which equals the date-0 price of a zero-coupon bond with maturity $t$), we obtain
\begin{equation}
    q_tP_t = q_{t+1}(P_{t+1}+D_{t+1}). \label{eq:noarbitrage}
\end{equation}
Then by the same argument as in \S\ref{subsec:definition} and using $q_0=1$, we obtain
\begin{equation}
    P_0=\sum_{t=1}^T q_tD_t+q_TP_T, \label{eq:P0_iter}
\end{equation}
and there is no bubble if the transversality condition $\lim_{T\to\infty}q_TP_T=0$ holds. (See \citet[\S2]{HiranoToda2024JME} and \citet[\S2]{HiranoTodaNecessity} for details.)

Now consider the continuous-time model without aggregate uncertainty. Let $F_t\ge 0$ be the cumulative dividend payout of an asset up to time $t$ (so $t\mapsto F_t$ is weakly increasing and right-continuous), and $P_t$ be its ex-dividend price. In a small time interval $(t,t+\Delta]$, the dividend is $F_{t+\Delta}-F_t$. Therefore the continuous-time counterpart of the no-arbitrage condition \eqref{eq:noarbitrage} is
\begin{equation*}
    q_tP_t=q_{t+\Delta}(P_{t+\Delta}+F_{t+\Delta}-F_t).
\end{equation*}
Subtracting $q_{t+\Delta}P_{t+\Delta}$ from both sides and taking the limit $\Delta\to 0$, we obtain
\begin{equation}
    -\diff(q_tP_t)=q_t\diff F_t.\label{eq:noarbitrage_cont}
\end{equation}
Integrating both sides from $t=0$ to $t=T$ and using $q_0=1$, we obtain
\begin{equation}
    P_0=\int_0^T q_t\diff F_t+q_TP_T, \label{eq:P_iter_cont}
\end{equation}
which is the continuous-time counterpart of \eqref{eq:P0_iter}. Multiplying both sides of \eqref{eq:TVC} by $\pi_t$, taking the unconditional expectation, and letting $T\to\infty$, we see that the transversality condition for asset pricing is $\lim_{T\to\infty}q_TP_T=0$. Under this condition, by \eqref{eq:P_iter_cont}, the asset price equals its fundamental value $V_0\coloneqq \int_0^\infty q_t\diff F_t$.

The following lemma provides a continuous-time counterpart of Lemma \ref{lem:bubble}.

\begin{lem}[Bubble characterization in continuous-time]\label{lem:bubble_cont}
In an economy without aggregate uncertainty, if $P_t>0$ for all $t$, the asset price exhibits a rational bubble if and only if $\int_0^\infty \frac{\diff F_t}{P_t}<\infty$.
\end{lem}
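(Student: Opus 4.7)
The plan is to mimic the discrete-time proof of the Bubble Characterization Lemma by deriving an explicit closed-form expression for $q_T P_T$ in terms of $\int_0^T dF_t/P_t$, from which the equivalence is immediate. Concretely, I aim to establish the identity
\begin{equation*}
    q_T P_T = P_0 \exp\!\left(-\int_0^T \frac{\diff F_t}{P_t}\right),
\end{equation*}
which shows that the transversality condition $\lim_{T\to\infty} q_T P_T = 0$ fails (that is, a bubble exists) if and only if $\int_0^\infty \diff F_t/P_t < \infty$.

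The first step is to start from the continuous-time no-arbitrage condition \eqref{eq:noarbitrage_cont}, namely $-\diff(q_t P_t) = q_t \diff F_t$. Since $P_t>0$ by hypothesis and $q_t>0$ as the price of a zero-coupon bond, I divide both sides by $q_t P_t$ to obtain
\begin{equation*}
    -\frac{\diff(q_t P_t)}{q_t P_t} = \frac{\diff F_t}{P_t}.
\end{equation*}
In the case most relevant to \citet{MiaoWang2018} where $\diff F_t = D_t \diff t$, the map $t \mapsto q_t P_t$ is absolutely continuous and the left-hand side equals $-\diff \log(q_t P_t)$. Integrating from $0$ to $T$ and using $q_0=1$ gives $\log(q_T P_T) = \log P_0 - \int_0^T \diff F_t/P_t$, which exponentiates to the target identity.

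The conclusion then follows by letting $T \to \infty$. Monotonicity of $F$ ensures $T \mapsto \int_0^T \diff F_t/P_t$ is nondecreasing, so the limit exists in $[0,\infty]$. If it is finite, then $\lim_{T\to\infty} q_T P_T = P_0 \exp(-\int_0^\infty \diff F_t/P_t) > 0$ and a bubble exists; if it is infinite, $\lim_{T\to\infty} q_T P_T = 0$ and the transversality condition holds, so $P_0=V_0$.

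The main technical obstacle is the step from $-\diff(q_t P_t)/(q_t P_t)$ to $-\diff \log(q_t P_t)$ when $F$ is merely right-continuous and allowed to have jumps. In that generality one must replace the ordinary exponential by the Doléans--Dade (Stieltjes) multiplicative exponential, producing an extra product of $(1-\Delta F_s/P_{s-})$ factors at the jump times; the equivalence between positivity of the limit and finiteness of $\int_0^\infty \diff F_t/P_t$ still holds, using the inequality $-\log(1-x)\ge x$ for $x\in[0,1)$. For the application to \citet{MiaoWang2018} the dividend process is absolutely continuous, so this subtlety does not bite and the elementary log-integration argument above suffices.
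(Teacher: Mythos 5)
Your proposal is correct and follows essentially the same route as the paper's proof: divide the no-arbitrage identity $-\diff(q_tP_t)=q_t\diff F_t$ by $q_tP_t$, integrate to get $q_TP_T=q_0P_0\exp\bigl(-\int_0^T \diff F_t/P_t\bigr)$, and let $T\to\infty$. Your additional remark on the jump case (Dol\'eans--Dade exponential) goes slightly beyond what the paper writes, but the core argument is identical.
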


\begin{proof}
Dividing both sides of \eqref{eq:noarbitrage_cont} by $q_tP_t>0$ and integrating from $t=0$ to $t=T$, we obtain
\begin{equation*}
    \int_0^T \frac{\diff F_t}{P_t}=-\int_0^T \frac{\diff (q_tP_t)}{q_tP_t}=-\int_0^T\diff \log (q_tP_t)=\log q_0P_0-\log (q_TP_T).
\end{equation*}
Rearranging terms, we obtain
\begin{equation*}
    q_TP_T=q_0P_0\exp\left(-\int_0^T \frac{\diff F_t}{P_t}\right).
\end{equation*}
Letting $T\to\infty$, the transversality condition $\lim_{T\to\infty}q_TP_T=0$ holds if and only if $\int_0^\infty \frac{\diff F_t}{P_t}=\infty$.
\end{proof}

\section{Transversality condition(s)}\label{subsec:TVC}

The condition \eqref{eq:TVC} is called the \emph{transversality condition} for asset pricing. This is another unfortunate use of language for two meanings. In the asset pricing literature, the transversality condition \eqref{eq:TVC} means that the present value of the asset in the far distant future approaches zero and hence there is no speculative motive. In infinite-horizon optimal control theory, the transversality condition is a necessary or sufficient optimality condition that states that the marginal benefit of deviating from the current plan at infinity is zero. See \citet{EkelandScheinkman1986}, \citet{Kamihigashi2002}, or the forthcoming textbook of \citet[\S15.3]{TodaEME} for more discussion. Interestingly, in infinite-horizon models, in the absence of financial frictions, the transversality condition \emph{for optimality} rules out rational bubbles (so the transversality condition \emph{for asset pricing} is satisfied): see \citet{Kamihigashi1998} for this result and \citet[\S3.4]{HiranoToda2024JME} for a related discussion. However, once financial frictions (such as shortsales constraints) are introduced, it is well known that rational bubbles are possible, so the transversality condition \emph{for optimality} is satisfied yet the transversality condition \emph{for asset pricing} is violated \citep{Bewley1980,Kocherlakota1992}. In contrast, in two-period overlapping generations models such as \citet{Samuelson1958} and \citet{Tirole1985}, because individual agents have a finite horizon, only the transversality condition for asset pricing is relevant.

This language use could be very confusing for researchers that are new to the literature. One should clearly distinguish the two meanings but it should be clear from the context. If not, it is safer to explicitly mention the transversality condition ``for optimality'' or ``for asset pricing''. In what follows, the transversality condition is always ``for asset pricing''.

\printbibliography
	
\end{document}